\documentclass[sigconf]{acmart}

\usepackage{booktabs} 

\setcopyright{rightsretained}



\acmConference[KDD 2017]{ACM SIGKDD Conference on Knowledge Discovery and Data Mining}{August 2017}{Halifax, Nova Scotia, Canada} 
\acmYear{2017}
\copyrightyear{2017}

\acmPrice{15.00}

\usepackage[ruled, vlined, nofillcomment, linesnumbered]{algorithm2e}
\usepackage{graphicx}
\usepackage{subfigure}
\usepackage{url}
\usepackage{booktabs}
\usepackage{array}
\usepackage{epsfig}
\usepackage{color}
\usepackage{comment}
\usepackage{mathrsfs}
\usepackage{amsmath,amssymb,amsthm} 
\usepackage{psfrag}
\usepackage{fullpage}
\usepackage{balance}  

\usepackage{tikz}
\usetikzlibrary{fit,external,positioning}
\tikzexternalize

\usepackage[latin1]{inputenc}
\usepackage[T1]{fontenc}

\newcommand{\set}[1]{\left\{#1\right\}}
\newcommand{\pr}[1]{\left(#1\right)}

\newcommand{\abs}[1]{{\left|#1\right|}}

\newcommand{\enset}[2]{\left\{#1 ,\ldots , #2\right\}}

\newcommand{\np}{\textbf{NP}}

\newcommand{\define}{\leftarrow}

\newcommand{\prbminstc}{\textsc{MinViol}\xspace}
\newcommand{\prbmaxtri}{\textsc{MaxTri}\xspace}

\newtheorem{problem}{Problem}

\newcommand{\argmax}{\operatornamewithlimits{arg\,max}}

\newcommand{\NP}{\ensuremath{\mathbf{NP}}}
\newcommand{\Poly}{\ensuremath{\mathbf{P}}}

\newcommand{\violations}{\ensuremath{\mathit{viol}}}
\newcommand{\covered}{\ensuremath{\mathit{tri}}}

\newcommand{\bigO}{\ensuremath{O}}

\newcommand{\stc}{{\sc\Large stc}\xspace}

\newcommand{\spara}[1]{\smallskip\noindent{\bf{#1}}}

\newcommand{\squishlist}{\begin{list}{$\bullet$}
  { \setlength{\itemsep}{0pt}
     \setlength{\parsep}{3pt}
     \setlength{\topsep}{3pt}
     \setlength{\partopsep}{0pt}
     \setlength{\leftmargin}{1.5em}
     \setlength{\labelwidth}{1em}
     \setlength{\labelsep}{0.5em} } }
\newcommand{\squishend}{
  \end{list}  }



\newcommand{\avg}{\ensuremath{\mathrm{avg}}}
\newcommand{\edges}{\ensuremath{E}}
\newcommand{\indedges}{\ensuremath{E_0}}


\newcommand{\DBLP}{{\textsl{DBLP}}\xspace}
\newcommand{\youtube}{{\textsl{Youtube}}\xspace}

\newcommand{\KDD}{{\textsl{KDD}}\xspace}
\newcommand{\ICDM}{{\textsl{ICDM}}\xspace}
\newcommand{\FBcirc}{{\textsl{FB-circles}}\xspace}
\newcommand{\FBfeat}{{\textsl{FB-features}}\xspace}
\newcommand{\lastFMart}{{\textsl{lastFM-artists}}\xspace}
\newcommand{\lastFMtag}{{\textsl{lastFM-tags}}\xspace}
\newcommand{\DBbook}{{\textsl{DB-bookmarks}}\xspace}
\newcommand{\DBtag}{{\textsl{DB-tags}}\xspace}

\newcommand{\BLS}{{\textsl{Sintos}}\xspace}
\newcommand{\BLA}{{\textsl{Angluin}}\xspace}
\newcommand{\our}{{\textsl{Greedy}}\xspace}

\newcommand{\BLSalt}{\mathit{S}\xspace}
\newcommand{\BLAalt}{\mathit{A}\xspace}

\pgfdeclarelayer{background}
\pgfdeclarelayer{foreground}
\pgfsetlayers{background,main,foreground}

\definecolor{yafaxiscolor}{rgb}{0.3, 0.3, 0.3}
\definecolor{yafcolor1}{rgb}{0.4, 0.165, 0.553}
\definecolor{yafcolor2}{rgb}{0.949, 0.482, 0.216}
\definecolor{yafcolor3}{rgb}{0.47, 0.549, 0.306}
\definecolor{yafcolor4}{rgb}{0.925, 0.165, 0.224}
\definecolor{yafcolor5}{rgb}{0.141, 0.345, 0.643}
\definecolor{yafcolor6}{rgb}{0.965, 0.933, 0.267}
\definecolor{yafcolor7}{rgb}{0.627, 0.118, 0.165}
\definecolor{yafcolor8}{rgb}{0.878, 0.475, 0.686}

\begin{document}
\title{Inferring the strength of social ties:\\a community-driven approach}

\author{Polina Rozenshtein}
\affiliation{%
  \institution{HIIT, Aalto University}
  \city{Espoo} 
  \country{Finland} 
}
\email{polina.rozenshtein@aalto.fi}

\author{Nikolaj Tatti}
\affiliation{%
  \institution{HIIT, Aalto University}
  \city{Espoo} 
  \country{Finland} 
}
\email{nikolaj.tatti@aalto.fi}

\author{Aristides Gionis}
\affiliation{%
  \institution{HIIT, Aalto University}
  \city{Espoo} 
  \country{Finland} 
}
\email{aristides.gionis@aalto.fi}

\begin{abstract}
Online social networks are growing and becoming denser. 
The social connections of a given person may have very high variability: 
from close friends and relatives to acquaintances to people who hardly know.
Inferring the strength of social ties is an important ingredient 
for modeling the interaction of users in a network and understanding their behavior. 
Furthermore, the problem has applications in computational social science, 
viral marketing, and people recommendation.

In this paper we study the problem of inferring the strength of social ties
in a given network.
Our work is motivated by a recent approach~\cite{sintos2014using}, 
which leverages the {\em strong triadic closure} (\stc) principle, 
a hypothesis rooted in social psychology~\cite{granovetter1973strength}.
To guide our inference process, 
in addition to the network structure, 
we also consider as input a collection of {\em tight} communities.
Those are sets of vertices that we expect to be connected via strong ties. 
Such communities appear in different situations, e.g.,
when being part of a community implies a strong connection to one of the existing members.

We consider two related problem formalizations
that reflect the assumptions of our setting: 
small number of \stc violations and strong-tie connectivity in the input communities. 
We show that both problem formulations are \NP-hard. 
We also show that one problem formulation is hard to approximate, 
while for the second we develop an algorithm with approximation guarantee.
We validate the proposed method on real-world datasets
by comparing with baselines that optimize 
\stc violations and community connectivity separately.

\end{abstract}



\settopmatter{printacmref=false, printfolios=false}
\setcopyright{none}
\pagestyle{plain} 

\maketitle

\section{Introduction}
\label{sec:intro}

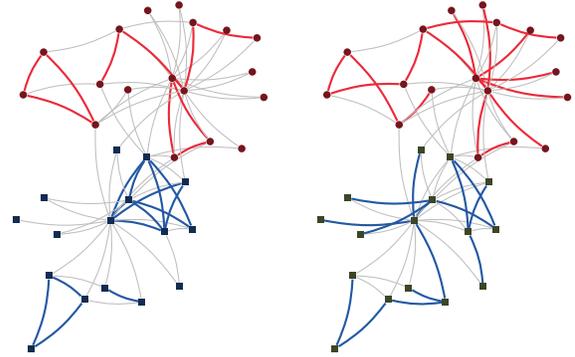
\begin{figure}
\begin{tikzpicture}[scale = 1]
\tikzstyle{lm1} = [fill = yafcolor5!50!black, inner sep = 1.2pt]
\tikzstyle{lm2} = [circle, fill = yafcolor4!50!black, inner sep = 1pt]
\tikzstyle{ll} = []
\tikzstyle{le1} = [yafcolor5, thick, bend left = 10]
\tikzstyle{le2} = [yafcolor4, thick, bend left = 10]
\tikzstyle{le3} = [gray!50, bend left = 10]
\node[lm1] (n0) at (-0.39521, -0.90369) {};
\node[lm1] (n1) at (-0.1494, -0.63131) {};
\node[lm1] (n2) at (0.089756, -0.058231) {};
\node[lm1] (n3) at (0.32331, -1.0532) {};
\node[lm1] (n4) at (0.023263, -1.9859) {};
\node[lm1] (n5) at (-1.2108, -1.6373) {};
\node[lm1] (n6) at (-0.73841, -1.9544) {};
\node[lm1] (n7) at (0.69671, -1.0172) {};
\node[lm2] (n8) at (0.45414, -0.064462) {};
\node[lm2] (n9) at (1.35, 0.053898) {};
\node[lm1] (n10) at (-0.47034, -1.7978) {};
\node[lm1] (n11) at (-1.6432, -0.89285) {};
\node[lm1] (n12) at (0.52145, -1.7831) {};
\node[lm1] (n13) at (0.59789, -0.39239) {};
\node[lm2] (n14) at (1.15, 1.6278) {};
\node[lm2] (n15) at (1.644, 0.73643) {};
\node[lm1] (n16) at (-1.4464, -2.6101) {};
\node[lm1] (n17) at (-1.278, -0.60231) {};
\node[lm2] (n18) at (0.51654, 1.9636) {};
\node[lm1] (n19) at (-0.30691, 0.035621) {};
\node[lm2] (n20) at (0.10109, 1.891) {};
\node[lm1] (n21) at (-1.1086, -1.0869) {};
\node[lm2] (n22) at (1.4907, 1.075) {};
\node[lm2] (n23) at (-0.2767, 1.6423) {};
\node[lm2] (n24) at (-1.5541, 0.77017) {};
\node[lm2] (n25) at (-1.2836, 1.3377) {};
\node[lm2] (n26) at (1.5531, 1.5267) {};
\node[lm2] (n27) at (-0.5343, 0.9102) {};
\node[lm2] (n28) at (-0.16321, 0.83704) {};
\node[lm2] (n29) at (0.70218, 1.7305) {};
\node[lm2] (n30) at (0.93024, 0.14795) {};
\node[lm2] (n31) at (-0.5944, 0.37166) {};
\node[lm2] (n32) at (0.42538, 0.99008) {};
\node[lm2] (n33) at (0.58393, 0.82341) {};
\begin{pgfonlayer}{background}
\draw (n0) edge[le1] (n1);
\draw (n0) edge[le1] (n2);
\draw (n0) edge[le1] (n3);
\draw (n0) edge[le3] (n4);
\draw (n0) edge[le3] (n5);
\draw (n0) edge[le3] (n6);
\draw (n0) edge[le3] (n7);
\draw (n0) edge[le3] (n8);
\draw (n0) edge[le3] (n10);
\draw (n0) edge[le3] (n11);
\draw (n0) edge[le3] (n12);
\draw (n0) edge[le1] (n13);
\draw (n0) edge[le3] (n17);
\draw (n0) edge[le3] (n19);
\draw (n0) edge[le3] (n21);
\draw (n0) edge[le3] (n31);
\draw (n1) edge[le1] (n2);
\draw (n1) edge[le1] (n3);
\draw (n1) edge[le1] (n7);
\draw (n1) edge[le3] (n13);
\draw (n1) edge[le3] (n17);
\draw (n1) edge[le3] (n19);
\draw (n1) edge[le3] (n21);
\draw (n1) edge[le3] (n30);
\draw (n2) edge[le1] (n3);
\draw (n2) edge[le1] (n7);
\draw (n2) edge[le3] (n8);
\draw (n2) edge[le3] (n9);
\draw (n2) edge[le3] (n13);
\draw (n2) edge[le3] (n27);
\draw (n2) edge[le3] (n28);
\draw (n2) edge[le3] (n32);
\draw (n3) edge[le3] (n7);
\draw (n3) edge[le3] (n12);
\draw (n3) edge[le1] (n13);
\draw (n4) edge[le3] (n6);
\draw (n4) edge[le1] (n10);
\draw (n5) edge[le1] (n6);
\draw (n5) edge[le3] (n10);
\draw (n5) edge[le1] (n16);
\draw (n6) edge[le1] (n16);
\draw (n8) edge[le2] (n30);
\draw (n8) edge[le2] (n32);
\draw (n8) edge[le3] (n33);
\draw (n9) edge[le3] (n33);
\draw (n13) edge[le3] (n33);
\draw (n14) edge[le3] (n32);
\draw (n14) edge[le3] (n33);
\draw (n15) edge[le3] (n32);
\draw (n15) edge[le3] (n33);
\draw (n18) edge[le3] (n32);
\draw (n18) edge[le3] (n33);
\draw (n19) edge[le3] (n33);
\draw (n20) edge[le3] (n32);
\draw (n20) edge[le3] (n33);
\draw (n22) edge[le3] (n32);
\draw (n22) edge[le3] (n33);
\draw (n23) edge[le3] (n25);
\draw (n23) edge[le2] (n27);
\draw (n23) edge[le3] (n29);
\draw (n23) edge[le3] (n32);
\draw (n23) edge[le2] (n33);
\draw (n24) edge[le2] (n25);
\draw (n24) edge[le3] (n27);
\draw (n24) edge[le2] (n31);
\draw (n25) edge[le2] (n31);
\draw (n26) edge[le2] (n29);
\draw (n26) edge[le3] (n33);
\draw (n27) edge[le3] (n33);
\draw (n28) edge[le3] (n31);
\draw (n28) edge[le3] (n33);
\draw (n29) edge[le3] (n32);
\draw (n29) edge[le2] (n33);
\draw (n30) edge[le2] (n32);
\draw (n30) edge[le3] (n33);
\draw (n31) edge[le3] (n32);
\draw (n31) edge[le3] (n33);
\draw (n32) edge[le2] (n33);
\end{pgfonlayer}
\end{tikzpicture}
\hspace{0.5cm}
\begin{tikzpicture}[scale = 1]
\tikzstyle{lm1} = [fill = yafcolor3!50!black, inner sep = 1.2pt]
\tikzstyle{lm2} = [circle, fill = yafcolor4!50!black, inner sep = 1pt]
\tikzstyle{ll} = []
\tikzstyle{le1} = [yafcolor5, thick, bend left = 10]
\tikzstyle{le2} = [yafcolor4, thick, bend left = 10]
\tikzstyle{le3} = [gray!50, bend left = 10]
\node[lm1] (n0) at (-0.39521, -0.90369) {};
\node[lm1] (n1) at (-0.1494, -0.63131) {};
\node[lm1] (n2) at (0.089756, -0.058231) {};
\node[lm1] (n3) at (0.32331, -1.0532) {};
\node[lm1] (n4) at (0.023263, -1.9859) {};
\node[lm1] (n5) at (-1.2108, -1.6373) {};
\node[lm1] (n6) at (-0.73841, -1.9544) {};
\node[lm1] (n7) at (0.69671, -1.0172) {};
\node[lm2] (n8) at (0.45414, -0.064462) {};
\node[lm2] (n9) at (1.35, 0.053898) {};
\node[lm1] (n10) at (-0.47034, -1.7978) {};
\node[lm1] (n11) at (-1.6432, -0.89285) {};
\node[lm1] (n12) at (0.52145, -1.7831) {};
\node[lm1] (n13) at (0.59789, -0.39239) {};
\node[lm2] (n14) at (1.15, 1.6278) {};
\node[lm2] (n15) at (1.644, 0.73643) {};
\node[lm1] (n16) at (-1.4464, -2.6101) {};
\node[lm1] (n17) at (-1.278, -0.60231) {};
\node[lm2] (n18) at (0.51654, 1.9636) {};
\node[lm1] (n19) at (-0.30691, 0.035621) {};
\node[lm2] (n20) at (0.10109, 1.891) {};
\node[lm1] (n21) at (-1.1086, -1.0869) {};
\node[lm2] (n22) at (1.4907, 1.075) {};
\node[lm2] (n23) at (-0.2767, 1.6423) {};
\node[lm2] (n24) at (-1.5541, 0.77017) {};
\node[lm2] (n25) at (-1.2836, 1.3377) {};
\node[lm2] (n26) at (1.5531, 1.5267) {};
\node[lm2] (n27) at (-0.5343, 0.9102) {};
\node[lm2] (n28) at (-0.16321, 0.83704) {};
\node[lm2] (n29) at (0.70218, 1.7305) {};
\node[lm2] (n30) at (0.93024, 0.14795) {};
\node[lm2] (n31) at (-0.5944, 0.37166) {};
\node[lm2] (n32) at (0.42538, 0.99008) {};
\node[lm2] (n33) at (0.58393, 0.82341) {};
\begin{pgfonlayer}{background}
\draw (n0) edge[le1] (n1);
\draw (n0) edge[le3] (n2);
\draw (n0) edge[le3] (n3);
\draw (n0) edge[le1] (n4);
\draw (n0) edge[le3] (n5);
\draw (n0) edge[le3] (n6);
\draw (n0) edge[le3] (n7);
\draw (n0) edge[le3] (n8);
\draw (n0) edge[le3] (n10);
\draw (n0) edge[le1] (n11);
\draw (n0) edge[le3] (n12);
\draw (n0) edge[le3] (n13);
\draw (n0) edge[le3] (n17);
\draw (n0) edge[le1] (n19);
\draw (n0) edge[le3] (n21);
\draw (n0) edge[le3] (n31);
\draw (n1) edge[le3] (n2);
\draw (n1) edge[le3] (n3);
\draw (n1) edge[le1] (n7);
\draw (n1) edge[le3] (n13);
\draw (n1) edge[le1] (n17);
\draw (n1) edge[le3] (n19);
\draw (n1) edge[le1] (n21);
\draw (n1) edge[le3] (n30);
\draw (n2) edge[le1] (n3);
\draw (n2) edge[le1] (n7);
\draw (n2) edge[le3] (n8);
\draw (n2) edge[le3] (n9);
\draw (n2) edge[le3] (n13);
\draw (n2) edge[le3] (n27);
\draw (n2) edge[le3] (n28);
\draw (n2) edge[le3] (n32);
\draw (n3) edge[le3] (n7);
\draw (n3) edge[le1] (n12);
\draw (n3) edge[le1] (n13);
\draw (n4) edge[le1] (n6);
\draw (n4) edge[le1] (n10);
\draw (n5) edge[le3] (n6);
\draw (n5) edge[le3] (n10);
\draw (n5) edge[le1] (n16);
\draw (n6) edge[le1] (n16);
\draw (n8) edge[le2] (n30);
\draw (n8) edge[le3] (n32);
\draw (n8) edge[le2] (n33);
\draw (n9) edge[le2] (n33);
\draw (n13) edge[le3] (n33);
\draw (n14) edge[le2] (n32);
\draw (n14) edge[le3] (n33);
\draw (n15) edge[le2] (n32);
\draw (n15) edge[le3] (n33);
\draw (n18) edge[le3] (n32);
\draw (n18) edge[le2] (n33);
\draw (n19) edge[le3] (n33);
\draw (n20) edge[le2] (n32);
\draw (n20) edge[le3] (n33);
\draw (n22) edge[le2] (n32);
\draw (n22) edge[le3] (n33);
\draw (n23) edge[le3] (n25);
\draw (n23) edge[le2] (n27);
\draw (n23) edge[le2] (n29);
\draw (n23) edge[le3] (n32);
\draw (n23) edge[le2] (n33);
\draw (n24) edge[le2] (n25);
\draw (n24) edge[le2] (n27);
\draw (n24) edge[le3] (n31);
\draw (n25) edge[le2] (n31);
\draw (n26) edge[le2] (n29);
\draw (n26) edge[le3] (n33);
\draw (n27) edge[le3] (n33);
\draw (n28) edge[le2] (n31);
\draw (n28) edge[le3] (n33);
\draw (n29) edge[le3] (n32);
\draw (n29) edge[le3] (n33);
\draw (n30) edge[le3] (n32);
\draw (n30) edge[le3] (n33);
\draw (n31) edge[le3] (n32);
\draw (n31) edge[le3] (n33);
\draw (n32) edge[le2] (n33);
\end{pgfonlayer}
\end{tikzpicture}

\caption{
\label{figure:karate}
Strong edges in the Karate-club dataset inferred by the algorithm 
of \citet{sintos2014using} (left) and our method (right) using two teams.  
The colors of the edges and the vertices depict the two teams.}

\end{figure}

The growth of online social networks has been an important factor in 
shaping our lives for the 21st century. 
68\,\% of adults in the US, 
also accounting for those who do not use internet at all, 
are facebook users.\footnote{\url{http://www.pewinternet.org/2016/11/11/social-media-update-2016/}}
Over the past few years, an ecosystem of online social-network platforms has emerged, 
serving different needs and purposes: 
being connected with close friends, 
sharing news and being informed, 
sharing photos and videos, 
making professional connections, and so on.

The emergence of such social-networking platforms has introduced many novel research directions.
First, online systems have enabled recording and studying human behavior at a very large scale. 
Second, the specific features of the different systems
are changing the way people interact with each other: 
new social norms are formed and human behavior is adapting. 
Consequently, data collected by online social-network systems
are used to analyze and understand human behavior and complex social phenomena.
Questions of interest include
understanding information-diffusion phenomena, 
modeling network evolution and predicting future behavior, 
identifying the role of users and network links,
and more.

A question of particular importance, 
which is the focus of this paper, 
is the problem of inferring {\em the strength of social ties} in a network.
Quantifying the strength of social ties
is an essential task for sociologists interested in understanding complex network dynamics
based on pair-wise interactions~\cite{granovetter1973strength}, 
or for engineers interested in designing applications 
related to viral marketing~\cite{de2014facebook} 
or friend recommendation~\cite{lu2010link}.

The problem of inferring the strength of social ties in a network
has been studied extensively in the graph-mining community~\cite{gilbert2009predicting,gilbert2012predicting,onnela2007structure,sintos2014using,tang2012inferring,xiang2010modeling}.
While most approaches use user-level features in order to estimate the social-tie strength
between pairs of users, our approach, 
inspired by the work of Sintos and Tsaparas~\cite{sintos2014using}, 
relies on the {\em strong triadic closure} (\stc) principle~\cite{davis1970clustering,easley2010networks,granovetter1973strength}.
The \stc principle assumes that there are two types of ties in the social network: 
{\em strong} and {\em weak}. 
It then asserts that it is unlikely to encounter a triple of users
so that two of the ties are strong while the third is missing. 
In other words, two users who have a strong tie to a  third common friend
should be acquainted to each other, i.e., 
they should have {\em at least} a weak tie to each other.

Sintos and Tsaparas~\cite{sintos2014using}
address the problem of inferring the strength of social ties 
(i.e., labeling the links of a given network as {\em strong} or {\em weak})
by leveraging the \stc property in an elegant manner.
They first assume that users are more interested in 
establishing and maintaining strong ties, 
as presumably, this is the reason that they joined the network.
Using this assumption they
formulate the link-strength inference problem 
by asking to assign the maximal number of strong ties
(or the minimal number of weak ties)
so that the \stc property holds.
They prove that the problem is \NP-hard
and they devise an approximation algorithm for the 
variant of minimizing the number of weak ties.

In this paper, 
in addition to the network structure, 
we also consider as a collection of 
topical communities $C_1,\ldots, C_k$.
We assume that the given
communities are {\em tight},
that is, each community $C_i$ represents 
a set of users with {\em focused} interest at a particular topic.
For example, such a tight community may be
($i$) a set of users who have been actively involved
in a discussion in the social network about a certain issue,
($ii$) the set of scientists who work on `deep learning,'
or
($iii$) the HR team of a company.

We then require that each given community $C_i$ should be connected via strong ties.
In other words, for every two nodes in $C_i$ there is a path made of {\em strong} ties.
This requirement reflects the fact that we consider
tight communities, as the examples above.
Clearly this constraint is less meaningful if we consider {\em loose} communities,
i.e., all facebook users who like the `Friends' TV series.

Equipped with these assumptions we now define the 
problem of inferring the strength of social ties: 
given a social network $G=(V,E)$, 
and a set of tight communities $C_1,\ldots, C_k\subseteq V$, 
we ask to label all the edges in $E$ as either {\em strong} or {\em weak} so that
($i$)
each community $C_i$ is connected via strong ties; and 
($ii$)
the total number of \stc violations is minimized.
Our problem definition captures two natural phenomena:
first, tight communities tend to have a backbone, 
e.g., being part of a community implies a strong connection to one of the existing members. 
Second, strong ties tend to close triangles, 
as postulated by the strong triadic closure principle, 
and thus, 
real-world social networks have relatively few \stc violations.

\vspace{2mm}
\noindent
{\bf Example.}
An illustration of our method on the Karate-club 
dataset~\cite{zachary1977information} is shown in 
Figure~\ref{figure:karate}.
Our method (right) is contrasted with the 
algorithm of \citet{sintos2014using} (left).
Both approaches use the \stc principle, 
but additionally, our method requires that certain communities
provided as input are connected with strong ties.
In the example, we consider the two ground-truth communities
of the Karate-club dataset. 
We observe that the sets of strong ties inferred by the two methods
are fairly similar.
We also observe that our method introduces an \stc violation
only when it is necessary for ensuring connectivity. 
On the other hand, the method of \citet{sintos2014using} 
leaves several disconnected singleton nodes,  
which is less intuitive.
$\Box$
\vspace{2mm}

We capture the above intuition using two related problem definitions. 
For the first problem (\prbminstc) we ask to minimize the number of \stc violations, 
while for the second problem (\prbmaxtri) we ask to maximize the number of non-violated open triangles
--- there cannot be a violation on a closed triangle.
In both cases we label the network edges 
so as to satisfy the connectivity constraint, with respect to strong edges, 
for all input communities. 

We show that both problems, \prbminstc and  \prbmaxtri, 
are \NP-hard, even if the input consists of one community. 
Furthermore, we show that \prbminstc is hard to approximate to any multiplicative factor. 
On the other hand, the problem \prbmaxtri is amenable to approximation:
its objective function is submodular and non-decreasing, 
while the connectivity constraints can be viewed as an intersection of matroids. 
Thus, the classic result of \citet{fisher1978analysis} applies, 
implying that a greedy algorithm leads to $1/(k + 1)$ approximation ratio. 

We evaluate our methods on real-world networks and input communities. 
Our quantitative results show that our method achieves a balance
between baselines that optimize \stc violations and community connectivity separately, 
while our case study suggests the strong edges selected by the method are meaningful and intuitive.

The remaining paper is as follows.
We introduce the notation and give the problem definition in Section~\ref{sec:prel}.
We show the computational hardness in Section~\ref{sec:nphard} and present the approximation
algorithm in Section~\ref{sec:approx}. The related work is discussed in Section~\ref{sec:related},
and the experimental evaluation is given in Section~\ref{sec:exp}. We conclude the paper with
remarks in Section~\ref{sec:conclusions}.

\section{Preliminaries and problem definition}
\label{sec:prel}

The main input for our problem is an undirected graph $G = (V, E)$ with $n$
vertices and $m$ edges. 
Given a subset of vertices $X\subseteq V$ and a subset of edges $F\subseteq E$, 
we write $F(X)$ to denote the edges in $F$ that connect vertices in $X$.

We are interested in labeling the set of edges $E$. 
Specifically, we want to label each edge as either {\em strong} or {\em weak}.

To specify a labeling of edges $E$ it is sufficient
to specify the set of strong edges $S \subseteq E$. 
To quantify the quality of a labeling $S \subseteq E$, for a given graph $G = (V, E)$,  
we use the \emph{strong triadic closure} ({\stc}) property.
Namely, given a triple $(u, v, w)$ of vertices such that $(u, v), (v, w) \in S$, 
we say that the triple violates the {\stc} property if $(u, w) \notin S$. 
In other words, a strong friend of a strong friend must
be connected, possibly with a weak edge.
We define $\violations(S; G)$ to be the number of {\stc} violations. 
Typically, $G$ is known from the context, and we omit it from the notation.

let $S \subseteq E$ be the set of edges considered to be strong. 
Assume a triple $(u, v, w)$ of vertices such that $(u, v), (v, w) \in S$.
We say that the triple violates \emph{strong triadic closure} ({\stc}) property
if $(u, w) \notin S$. In other words, two users who have a common strong friend should
be connected, possibly with a weak tie.
We define $\violations(S; G)$ to be the number of {\stc} violations. 
Typically, $G$ is known from the context, and we 
simply denote the number of violations by $\violations(S)$.

As discussed in the introduction, our goal is to discover a strong backbone of the
graph. At simplest we are looking for a set of edges that connect the whole
graph with strong ties while minimizing the number of violations.

We also consider a more general case, where we are given a \emph{set of communities},
and the goal is to ensure that
each community is connected with strong ties.

More formally, we have the following problem definition

\begin{problem}[\prbminstc]
\label{problem:stc}
Given a graph $G=(V,E)$ and a set of communities $C_1,\ldots, C_k\subseteq V$,
find a set of strong edges $S\subseteq E$ such that
each
$(C_i, S(C_i))$ is connected 
and the number
of {\stc} violations, $\violations(S)$, is minimized.
\end{problem}

In the above problem definition $(C_i, S(C_i))$ is the subgraph of $G$
induced by the vertices in $C_i$ {\em and} the edges in $S$, 
that is, 
$S(C_i) = \{(u,v)\in E \mid u,v\in C_i \mbox{ and } (u,v)\in S\}$.

In order for \prbminstc to have at least one feasible solution, we assume that $(C_i,
E(C_i))$ is connected for each $C_i$.

In addition to minimization version, we consider a maximization version
of the problem.
In order to do that, given a graph $G$, let $T$ be the number of open triangles in $G$. 
We define $\covered(S) = T - \violations(S)$ to be the number of open triangles that are not
violated. 

This leads to the following optimization problem.
\begin{problem}[\prbmaxtri]
\label{problem:stc}
Given a graph $G=(V,E)$ and a set of communities $C_1,\ldots, C_k\subseteq V$,
find a set of strong edges $S$ such that
each
$(C_i, S(C_i))$ is connected,
and the number of non-violated triangles, $\covered(S)$, is maximized.
\end{problem}

Note that \stc violations can occur only for open triangles.
Therefore, $\covered(S)= T - \violations(S)$ is nonnegative, 
while it achieves its maximum value $T$ when there are no \stc violations.

Obviously, \prbminstc and \prbmaxtri have the same optimal answer.  However, we
will see that they yield different approximation results: \prbminstc cannot
have any multiplicative approximation guarantee (constant or non-constant) while a greedy algorithm has $1/(k + 1)$
guarantee for \prbmaxtri.

\section{Computational complexity}
\label{sec:nphard}

Our next step is to establish that \prbminstc (and \prbmaxtri) are \np-hard.
Moreover, we show that \prbminstc cannot have any multiplicative approximation
guarantee.

\begin{proposition}
Deciding whether there is a solution
\prbminstc with zero violations is \NP-complete. Thus, there is no multiplicative approximation
algorithm for \prbminstc,
unless \Poly=\NP. The result holds even if we use only one community.
\end{proposition}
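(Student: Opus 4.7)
I will establish NP-completeness of the decision problem ``does a feasible $S$ with zero violations exist?'' for \prbminstc, and derive the inapproximability as a corollary. Membership in NP is straightforward: given a candidate $S \subseteq E$, we verify in polynomial time that each $(C_i, S(C_i))$ is connected and that every pair $(u,v), (v,w) \in S$ satisfies $(u,w) \in E$. For hardness I would reduce from 3-SAT, using a single community, so that the stronger claim ``hardness even for $k=1$ community'' comes out for free.

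The reduction hinges on the observation that a zero-violation $S$ satisfies, for each vertex $v$, that $N_S(v)$ is a clique of $G$; meanwhile the community constraint demands enough strong edges inside $C$ to span it. Given a 3-SAT formula $\phi$ on variables $x_1,\dots,x_n$ and clauses $c_1,\dots,c_m$, I would build a graph $G_\phi$ and community $C_\phi$ as follows. For each variable $x_i$, a variable gadget has a central vertex $v_i$ whose $G_\phi$-neighborhood inside $C_\phi$ splits into two cliques $T_i$ (true-branch) and $F_i$ (false-branch) with no $G_\phi$-edges between $T_i$ and $F_i$; validity then forces $N_S(v_i)\subseteq T_i$ or $N_S(v_i)\subseteq F_i$, naturally encoding a truth value for $x_i$. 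For each clause $c_j$, a clause gadget has a ``junction'' vertex whose required connection to the backbone of $C_\phi$ can only be realised by borrowing from an already-active literal-branch of a variable that satisfies $c_j$; if no literal of $c_j$ is true, any attempt to span $C_\phi$ through the junction creates an open wedge $(u,v,w)$ with $(u,w)\notin E$, i.e.\ an STC violation.

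The main obstacle will be making the gadgets compose cleanly across shared variables: if $x_i$ appears in many clauses, its chosen assignment must propagate consistently from one clause gadget to the next, with no accidental valid strong subgraph bypassing the encoded logic. I plan to use chains of overlapping triangles as consistency-propagation ``wires'' between a variable gadget and the clause gadgets of its literals, and to block unwanted cliques by strategically placing non-edges inside $G_\phi$. The bulk of the case analysis will be showing that in every zero-violation connected $S$ the only feasible topologies inside each gadget are the intended ones, so that feasible $S$ are in bijection with satisfying assignments. Finally, the inapproximability is immediate: any polynomial-time algorithm returning a labeling with at most $\alpha\cdot\mathrm{OPT}$ violations, for any multiplicative factor $\alpha$, would output $0$ violations whenever $\mathrm{OPT}=0$ and at least one violation otherwise, thus deciding the zero-violation question in polynomial time and contradicting the NP-hardness unless $\Poly=\NP$.
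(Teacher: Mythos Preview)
Your proposal takes a genuinely different route from the paper. The paper reduces from \textsc{CliqueCover} rather than 3-SAT: given a graph $G$ and integer $k$, it builds an auxiliary graph $H$ by attaching to each original vertex $v_i$ a private pendant $u_i$ and $k$ ``selector'' vertices $x_1,\dots,x_k$, with edges $(u_i,v_i)$, $(v_i,x_j)$, $(u_i,x_j)$ for all $i,j$, and $X$ a clique; the single community is all of $H$. A zero-violation spanning $S$ then forces each $v_i$ (or its pendant $u_i$) to attach to exactly one $x_j$, and the wedge $v_i\text{--}x_j\text{--}v_\ell$ being non-violating forces $(v_i,v_\ell)\in E$ whenever $v_i,v_\ell$ pick the same $x_j$. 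This is precisely a clique cover of $G$ into $k$ parts.

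The reason \textsc{CliqueCover} is the natural source is the very observation you make: zero violations means $N_S(v)$ is a clique of $G$ for every $v$. That is already a clique-partition-type constraint, so the paper's reduction is almost a direct translation with very little gadgetry. Your 3-SAT plan has to manufacture this clique structure artificially via variable and clause gadgets and ``wires,'' and the parts you defer (``the bulk of the case analysis will be\dots'') are exactly where such reductions usually break: ensuring that no unintended strong subgraph spans the community while avoiding all wedges. In particular, ``chains of overlapping triangles'' do not by themselves propagate anything, since closed triangles never violate STC; you would need carefully placed non-edges to create the forcing, and controlling all cross-gadget interactions is nontrivial. Your plan is plausible in outline, but the paper's \textsc{CliqueCover} reduction gets the same conclusion with a one-page argument and no delicate case analysis. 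Your treatment of NP membership and of the inapproximability corollary is correct and matches the paper.
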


\begin{proof}
To prove the result we will reduce \textsc{Clique\-Cover} to \prbminstc.
In an instance of \textsc{Clique\-Cover}, we are asked to partition a graph $G = (V, E)$
to $k$ subgraphs, each one of them being a clique.

Assume a graph $G = (V, E)$, where $V = \enset{v_1}{v_n}$, and an integer $k$.
We can safely assume that $G$ contains at least one singleton vertex, say $v_1$.
Otherwise, we can add a singleton vertex to $G$ and increase $k$ without changing
the outcome of \textsc{Clique\-Cover}.

For the reduction of  \textsc{Clique\-Cover} to \prbminstc, 
we first define a graph $H = (W, A)$.
The vertex set $W$ consists of $2n + k$ vertices grouped in 3 sets: the first set are the original vertices $V$, the
second set is $U$ with $n$ vertices, the third set is $X$
containing $k$ vertices.

The edges $A$ are as follows: We keep the original edges $E$. For each $i = 1,
\ldots, n$ and $j = 1, \ldots, k$, we add $(u_i, v_i)$, $(v_i, x_j)$, and $(u_i,
x_j)$.  We also fully-connect $X$.

We add one community consisting of the whole graph.

We claim that there is a 0-solution to \prbminstc if and only if there is a
clique cover for $G$. Since \textsc{Clique\-Cover} is \np-hard, this
automatically proves the inapproximability.

Assume first that we are given a clique cover $\mathcal{P}=\{P_1,\ldots,P_k\}$. 
Define the following set
of strong edges. For each vertex $v_i$, let $P_j$ be the clique containing $v_i$;
add edges $(u_i, v_i)$, $(v_i, x_j)$ to $S$. Finally, add an edge $(x_1, x_j)$ for each $j = 2, \ldots, k$.
It is straightforward to see that the connectivity constraints are satisfied.
The strong wedges are
\[
\begin{split}
u_i\text{--}v_i\text{--}x_j,&\quad\text{for}\quad v_i \in P_j, \\
v_i\text{--}x_j\text{--}x_1,&\quad\text{for}\quad v_i \in P_j, \text{ and } j \neq 1,\\
v_i\text{--}x_1\text{--}x_j,&\quad\text{for}\quad v_i \in P_1, \text{ and } j \neq 1, \\
x_j\text{--}x_1\text{--}x_q,&\quad\text{for}\quad q \neq j, \\
v_i\text{--}x_j\text{--}v_\ell,&\quad\text{for}\quad v_i, v_\ell \in P_j, \text{ and } i \neq \ell. \\
\end{split}
\]
None of these wedges induce a violation, the last one follows from the fact that $\mathcal{P}$ is a clique cover.
Thus $\violations(S) = 0$.

To prove the other direction,
let $S$ be the set of strong edges such that $\violations(S) = 0$.

Fix $i = 1, \ldots, n$.
To satisfy the connectivity, $(u_i, x_j) \in S$ or $(u_i, v_i) \in S$ (or both) for some $j$.
Define $Y = \set{v_i; (u_i, x_j) \in S}$ and $Z = V \setminus Y$.

Let $v_i \in Z$. Since $(u_i, v_i) \in S$, all edges adjacent to $v_i$ in $E$ are weak.
Thus, to satisfy the connectivity, we must have $(v_i, x_j)$ for some $j$.

Define two families $\mathcal{A}$ and $\mathcal{B}$, each of $k$ sets, by
\[
\begin{split}
	A_j & = \set{v_i \in Y; (u_i, x_j) \in S}
	\quad\text{and}\quad  \\
	B_j & = \set{v_i \in Z; (v_i, x_j) \in S}.
\end{split}
\]
Write $A_0 = B_0 = \emptyset$, and define a family $\mathcal{P}$ of $k$ disjoint sets
by $P_j = (A_j \cup B_j) \setminus P_{j - 1}$. $\mathcal{P}$ covers $V$ since
each vertex in $V$ is in $A_j$ or $B_j$ for some $j$.

We claim that $\mathcal{P}$ is a clique cover. To see this, let $v_i, v_\ell \in P_j$.
If $v_i \in Y$ and $v_\ell \in Z$, then $u_i$--$x_j$--$v_\ell$ is a violation since $i \neq \ell$.
If $v_i, v_\ell \in Y$, then $u_i$--$x_j$--$u_\ell$ is a violation, or $i = \ell$. 
If $v_i, v_\ell \in Z$, then either $i = \ell$ or $(v_i, v_\ell) \in E$.
This shows that $\mathcal{P}$ is a clique cover. 
\end{proof}

\begin{corollary}
The \prbmaxtri problem is \NP-hard. 
The result holds even if we use only one community.
\end{corollary}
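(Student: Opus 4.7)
The plan is to derive the corollary directly from the preceding proposition via a trivial Turing reduction, exploiting the identity $\covered(S) = T - \violations(S)$ where $T$ is the total number of open triangles in $G$, a quantity that can be computed in polynomial time (e.g.\ in $\bigO(nm)$ by iterating over vertices and pairs of neighbors).

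First I would observe that the proposition shows it is \NP-complete to decide, on input $(G, C_1)$ with a single community $C_1$, whether there exists a feasible $S$ with $\violations(S) = 0$. Next, given any hypothetical polynomial-time algorithm $\mathcal{A}$ for \prbmaxtri, I would use it to decide this \NP-complete problem as follows: on input $(G, C_1)$, first compute $T$, the number of open triangles of $G$; then run $\mathcal{A}$ on the same instance to obtain an optimal value $\covered(S^*)$; finally answer ``yes'' iff $\covered(S^*) = T$. Correctness follows because, as noted in the paper, \prbminstc and \prbmaxtri share the same optimal solutions, so the optimum of \prbmaxtri equals $T$ exactly when the optimum of \prbminstc is $0$.

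Since this reduction is polynomial and uses a single community, a polynomial-time algorithm for \prbmaxtri on one-community instances would yield a polynomial-time decision procedure for the \NP-complete zero-violation problem, forcing $\Poly = \NP$. Hence \prbmaxtri is \NP-hard even when restricted to a single community.

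I do not anticipate any real obstacle: the only subtle point is that the reduction is not strictly a many-one reduction from \textsc{Clique\-Cover} to \prbmaxtri (because the mapping between ``$\covered = T$'' and ``clique cover exists'' goes through the proposition), but a Turing reduction suffices to conclude \NP-hardness, and the computation of $T$ is standard. The construction neither enlarges the community set nor modifies the graph from the proposition's reduction, so the ``one community'' refinement is preserved for free.
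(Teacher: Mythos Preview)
Your proposal is correct and matches the paper's intended reasoning: the corollary is stated without proof, following immediately from the proposition together with the identity $\covered(S) = T - \violations(S)$, so that optimally solving \prbmaxtri would decide the \NP-complete zero-violation question. Your explicit Turing-reduction framing and the remark that $T$ is polynomial-time computable simply spell out what the paper leaves implicit.
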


\section{Approximation algorithm}\label{sec:approx}

In the previous section we saw that the problems
\prbminstc and \prbmaxtri are \NP-hard, even for one community,
and additionally, \prbminstc is hard to approximate to any multiplicative factor. 
In this section we show that \prbmaxtri can be approximated with 
$1 / (k + 1)$ guarantee, 
where $k$ is the number of communities in the input.  
As an imporant consequence, if we have one community, 
we can find a solution with approximation guarantee $1 / 2$.
Furthermore, it follows that if all communities are edge-disjoint,
our algorithm yields a $1/2$ approximation guarantee.

To prove the approximation algorithm we argue that $\covered(\cdot)$ is submodular
with respect to weak edges. 
Moreover, the connectivity constraint of
each communitiy can be viewed as a matroid.
Thus, satisfying all the connectivity constraints 
is an intersection of matroids.

These properties 
allow us to use a classic result of maximizing a submodular
function over an intersection of $k$ matroids:
Fisher et al.~\cite{fisher1978analysis} showed that a greedy algorithm
leads to $1/(k + 1)$ approximation ratio. 
Here the greedy algorithms starts
with none of the edges being weak, that is, all edges are strong.  
We find a strong edge, say $e$, inducing the most violations.  
We convert $e$ to a weak edge if the connectivity constraints allow it. 
Otherwise, we let $e$ being strong. 
The pseudo-code is given in Algorithm~\ref{alg:greedySConnect}.

Note that our problem formulation is agnostic
with respect to whether strong edges should be maximized or minimized.
In Algorithm~\ref{alg:greedySConnect}
strong edges are kept, even if they are not crucial for connectivity, 
as long as they do not induce any violations. 
This behavior is in line with the idea of \citet{sintos2014using},
who aim to maximize the number of strong edges. 
It is in contrast, however, with our second baseline, 
the algorithm of \citet{angluin13connectivity}, 
who want to find a minimum set of edges to ensure connectivity. 
If we wish to obtain a minimal number of strong edges, 
we can continue the main iteration in Algorithm~\ref{alg:greedySConnect}
and convert to weak all edges that are not necessary for connectivity
and do not create any \stc violations.

\begin{algorithm}[t]
	\caption{Greedy algorithm for \prbmaxtri}
	\label{alg:greedySConnect}
	$S \define E$; 
	$A \define E$\;	
	\While{$A\ne \emptyset$} {
		$e=\argmax_{e \in A}\covered(S \setminus \set{e})$\; 
		\If{$S\setminus \{e\}$ satisfies the connectivity constraints} { 
			$S \define S\setminus\{e\}$\;
		}
		$A \define A\setminus\{e\}$\;
	}
	\Return $S$\;
\end{algorithm}

We now show the properties required by the result of Fisher et al.~\cite{fisher1978analysis}
for the greedy algorithm to yield approximation ratio $1/(k + 1)$. 
We first show that the function $\covered(\cdot)$ is submodular with respect to weak edges.

\begin{proposition}
Consider a graph $G = (V, E)$.
Let $f(W) = \covered(E \setminus W)$.
The function $f$ is submodular and non-decreasing.
\end{proposition}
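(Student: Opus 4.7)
The plan is to rewrite $f$ as a sum of elementary coverage indicators, one per open triangle, and then derive both properties from the fact that a set-cover indicator on a two-element ground set is submodular and non-decreasing.

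First, I would unpack the definition. As the preliminaries already note, an \stc violation can only arise on an open triangle, i.e., a wedge $u\text{--}v\text{--}w$ with $(u,v),(v,w)\in E$ and $(u,w)\notin E$. Such a wedge is \emph{violated} by $S$ exactly when both $(u,v)$ and $(v,w)$ lie in $S$, and it is counted in $\covered(S)$ exactly when at least one of these two edges is \emph{not} in $S$. Substituting $S=E\setminus W$, a wedge $t=u\text{--}v\text{--}w$ contributes to $f(W)=\covered(E\setminus W)$ iff $W$ intersects the pair $A_t=\{(u,v),(v,w)\}$.

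Next, let $\chi_t(W)=\mathbb{1}[W\cap A_t\neq\emptyset]$ and let $\mathcal{T}$ denote the set of wedges of $G$. The previous paragraph gives
\[
f(W)\;=\;\sum_{t\in\mathcal{T}}\chi_t(W).
\]
Since sums of submodular non-decreasing functions are submodular non-decreasing, it suffices to verify both properties for a single $\chi_t$. Monotonicity is immediate: enlarging $W$ cannot destroy an intersection with $A_t$. For submodularity, fix $W\subseteq W'\subseteq E$ and $e\in E\setminus W'$, and compare the marginal gains $\chi_t(W\cup\{e\})-\chi_t(W)$ and $\chi_t(W'\cup\{e\})-\chi_t(W')$. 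If $e\notin A_t$ both marginals are $0$. If $e\in A_t$, each marginal equals $1$ when the corresponding set is disjoint from $A_t$ and $0$ otherwise; since $W\cap A_t=\emptyset$ is implied by $W'\cap A_t=\emptyset$ (as $W\subseteq W'$), the marginal at $W$ is at least the marginal at $W'$. Hence $\chi_t$ is submodular, and so is $f$.

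I do not foresee a real obstacle: the only subtlety is making sure we use the ``violations live on open triangles'' convention that was fixed in Section~\ref{sec:prel}, since under any other reading the edge $(u,w)\in E$ would also participate in the indicator $\chi_t$ and the clean two-element coverage structure would be lost. Once that is pinned down, the decomposition $f=\sum_t\chi_t$ makes both submodularity and monotonicity a one-line per-wedge check.
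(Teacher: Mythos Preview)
Your proof is correct but takes a genuinely different route from the paper. You decompose $f$ as a sum of per-wedge coverage indicators $\chi_t(W)=\mathbb{1}[W\cap A_t\neq\emptyset]$, so that $f$ is literally a coverage function on the ground set $E$ with cover sets $\{A_t\}_{t\in\mathcal T}$, and submodularity and monotonicity follow from the textbook fact that coverage is submodular (which you re-derive by a direct marginal comparison). The paper instead argues on the complement: it shows $\violations(\cdot)$ is supermodular and non-decreasing in the strong-edge set $S$ by computing the marginal $\violations(S\cup\{e\})-\violations(S)=|N_S(u)\cap\overline N(v)|+|N_S(v)\cap\overline N(u)|$ and observing this is monotone in $S$, then transfers both properties to $\covered$ and to $f$ via the submodular-under-complementation identity. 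Your argument is shorter and identifies $f$ as a canonical coverage function in one step; the paper's neighbor-counting formula, by contrast, is precisely the per-edge score the greedy algorithm maintains, so their proof doubles as the derivation of the priority used in Algorithm~\ref{alg:greedySConnect}.
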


\begin{proof}
To prove the submodularity we show that $\violations(\cdot)$ is supermodular
with respect to strong edges. 
This makes $\covered(\cdot)$ submodular with respect to strong edges, 
which in turn makes $f$ submodular with respect to weak edges.
For the last implication it is well-known
that a function is submodular if and only if its complement is
submodular.\!\footnote{see, for example,
\url{http://melodi.ee.washington.edu/~bilmes/ee595a_spring_2011/lecture1_presented.pdf}
for a proof.}

Let $S$ be a set of strong edges. For $a$ vertex $u$, define $N_{X}(u) = \set{
v \mid (u,v)\in X}$ to be the strong-neighbors of $u$. Define also $\overline{N}(u) =
\{ v \mid (u,v)\not\in E\}$ to be the non-neighbors or $u$.

The number of additional \stc violations introduced by labeling edge $e$ as
strong is 
\[ 
\violations(S\cup\{ e\}) - \violations(S)
 = \abs{ N_{S}(u) \cap \overline{N}(v) } 
 + \abs{ N_{S}(v) \cap \overline{N}(u) }.
\]
Let $T \subseteq S$.
For any $u \in V$ and any edge set $W$, we have 
\begin{equation}
\label{eq:monotone}
	\abs{ N_{T}(u) \cap W } \le \abs{ N_{S}(u) \cap W }. 
\end{equation}
This implies that for $T\subseteq S\subseteq E$
and any edge $e \notin S$, 
\[ 
\violations(T\cup\{ e\}) - \violations(T)
\le \violations(S\cup\{ e\}) - \violations(S),
\]
which proves the supermodularity of $\violations(\cdot)$.

To prove the monotonicity, we show that $\violations(\cdot)$ is non-decreasing.
This makes $\covered(\cdot)$ non-increasing, which makes $f$ non-decreasing.

Let $T \subseteq S \subseteq E$.
The number of violations induced by set $S$ is half of the sum of violations
induced by each edge $(u,v)\in S$,
\[
\violations(S) =  
\frac{1}{2}\sum_{(u,v)\in S}
\abs{ N_{S}(u) \cap \overline{N}(v) } 
	+ \abs{ N_{S}(v) \cap \overline{N}(u) }.
\]
One half is needed as each violated triangle
is considered twice, because it is caused by two strong edges.
Similarly,
\[
\violations(T) =  
\frac{1}{2}\sum_{(u,v)\in T}
\abs{ N_{T}(u) \cap \overline{N}(v) } 
	+ \abs{ N_{T}(v) \cap \overline{N}(u) }.
\]

Since $T \subseteq S$, each edge occuring in the sum
occurs also in the sum for $\violations(S)$. 
Moreover, Equation~(\ref{eq:monotone}) guarantees 
that the term corresponding to an edge $(u, v)$ in $\violations(T)$ is smaller than the
than the term corresponding to the same edge $(u, v)$ in $\violations(S)$.
Consequently, $\violations(\cdot)$ is non-decreasing.
\end{proof}

Our next step is to argue that the connectivity constraints are matroids with
respect to weak edges.  
Fortunately, this is a known result and these matroids are commonly known as \emph{bond matroids},
see for example, Proposition 3.3~by~\citet{oxley2003matroid}.

\begin{proposition}
\label{prop:bond}
Assume a graph $G = (V, E)$ and a subset $C$ such that $(C, E(C))$ is connected.
Define a family of sets
\[
	\mathcal{M} = \set{W \subseteq E ; (C, E(C) \setminus W(C)) \text{ is connected}}.
\]
Then $\mathcal{M}$ is matroid.
\end{proposition}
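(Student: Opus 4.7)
The plan is to identify $\matroid$ with a direct sum of two matroids: the bond (cographic) matroid of the induced subgraph $(C, E(C))$ on ground set $E(C)$, together with the free matroid on ground set $E \setminus E(C)$. The motivation is that the defining condition for $W \in \matroid$ only mentions the restriction $W(C) = W \cap E(C)$, so edges outside $E(C)$ play no role in membership and can be regarded as ``free'' elements, includable in any independent set without restriction.

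The key step is then the duality identification on $E(C)$. A subset $W' \subseteq E(C)$ satisfies the property that $(C, E(C) \setminus W')$ is connected if and only if $W'$ contains no minimal edge cut of $(C, E(C))$. Minimal edge cuts are precisely the cocircuits of the graphic matroid of $(C, E(C))$, and the cocircuit-free subsets are by definition the independent sets of the dual matroid, i.e., the bond matroid. Hence the restriction of $\matroid$ to $E(C)$ is exactly the independent-set family of the bond matroid on $E(C)$. Combining this with the previous paragraph, $\matroid$ is the direct sum of the bond matroid on $E(C)$ and the free matroid on $E \setminus E(C)$, and since the direct sum of matroids is itself a matroid, the proof is complete.

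The only substantive step is the duality identification, which is textbook matroid theory and can be cited from Oxley rather than reproved, as the statement of the proposition already suggests. If a direct axiomatic verification is preferred, nonemptiness and downward closure are immediate: the empty set lies in $\matroid$ since $(C, E(C))$ is connected by assumption, and removing fewer edges cannot disconnect an already connected graph. For the exchange axiom, given $W_1, W_2 \in \matroid$ with $\abs{W_1} < \abs{W_2}$: if some $e \in (W_2 \setminus W_1) \setminus E(C)$ exists, then $W_1 \cup \set{e}$ has the same $C$-restriction as $W_1$ and hence lies in $\matroid$; otherwise $W_2 \setminus W_1 \subseteq E(C)$ and $\abs{W_1 \cap E(C)} < \abs{W_2 \cap E(C)}$, so the exchange axiom of the bond matroid on $E(C)$ yields the required element. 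The main (and only) potential obstacle is justifying the graph-cut/cocircuit equivalence, which I would avoid by simply citing Oxley.
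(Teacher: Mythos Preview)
Your proposal is correct and matches the paper's approach: the paper does not give a proof at all, merely stating that this is the well-known bond matroid and citing Oxley. Your write-up is in fact more careful than the paper's one-line citation, since you make explicit the decomposition into the bond matroid on $E(C)$ and the free matroid on $E \setminus E(C)$; the paper's $\mathcal{M}$ has ground set $E$ rather than $E(C)$, so strictly speaking this direct-sum observation is needed to bridge the gap between the cited result and the stated proposition, and the paper glosses over it.
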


The two propositions show that we can use the result by~\citet{fisher1978analysis}, and obtain
$1 / (k + 1)$ guarantee, where $k$ is the number of communities, 
i.e., the sets of vertices for which we require a connectivity constraint.

We can obtain a better guarantee, $1/2$, if we know that communities are edge-disjoint.
This follows from the fact that we can express the connectivity constraints as a single matroid.

\begin{proposition}
Assume a graph $G = (V, E)$ and family of edge-disjoint subsets $C_1, \ldots, C_k$,
such that each $(C_i, E(C_i))$ is connected.
Define a family of sets
\[
	\mathcal{M} = \set{W \subseteq E ; (C_i, E(C_i) \setminus W(C_i)) \text{ is connected, for every }i}.
\]
Then $\mathcal{M}$ is matroid.
\end{proposition}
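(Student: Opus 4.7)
The plan is to reduce the claim to Proposition~\ref{prop:bond} together with the fact that a direct sum of matroids on disjoint ground sets is again a matroid. The edge-disjointness assumption on the communities is exactly what we need to make the direct sum construction go through.

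First, observe the following rewriting. For each community $C_i$, let $E_i = E(C_i)$ and let
\[
\mathcal{M}_i = \set{U \subseteq E_i ; (C_i, E_i \setminus U) \text{ is connected}}.
\]
By Proposition~\ref{prop:bond}, each $\mathcal{M}_i$ is a matroid on the ground set $E_i$ (the bond matroid of the subgraph $(C_i, E_i)$). Because the communities are edge-disjoint, the ground sets $E_1, \ldots, E_k$ are pairwise disjoint. Moreover, since $W(C_i) = W \cap E_i$, the condition defining $\mathcal{M}$ in the proposition statement can be rephrased as: $W \cap E_i \in \mathcal{M}_i$ for every $i = 1, \ldots, k$, while there is no constraint on $W \cap E_0$, where $E_0 = E \setminus \bigcup_i E_i$.

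Next, I would assemble these ingredients into a single matroid on $E$ via direct sum. Let $\mathcal{M}_0$ be the free matroid on $E_0$, i.e., every subset of $E_0$ is independent. Since $E_0, E_1, \ldots, E_k$ partition $E$, one can form the direct sum
\[
\mathcal{N} = \mathcal{M}_0 \oplus \mathcal{M}_1 \oplus \cdots \oplus \mathcal{M}_k,
\]
whose independent sets are exactly those $W \subseteq E$ with $W \cap E_i \in \mathcal{M}_i$ for all $i = 0, \ldots, k$. It is a standard fact that a direct sum of matroids over disjoint ground sets is a matroid (the downward-closure and exchange axioms transfer coordinate-by-coordinate). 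By the rewriting above, $\mathcal{N}$ is precisely $\mathcal{M}$, and therefore $\mathcal{M}$ is a matroid.

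The only mildly subtle point, and thus the main thing I would be careful about, is the edge-disjointness hypothesis: without it, an edge $e$ lying in two communities would be controlled by two different bond matroids simultaneously and the direct sum description breaks down (one would only get an intersection of two matroids, which need not be a matroid). Everything else is a direct appeal to Proposition~\ref{prop:bond} and to the standard direct-sum construction, so no further calculation is required.
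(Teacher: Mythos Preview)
Your proof is correct and follows exactly the route the paper takes: invoke Proposition~\ref{prop:bond} to get a bond matroid for each community and then use the direct-sum lemma to combine them, with edge-disjointness guaranteeing the ground sets are disjoint. Your explicit inclusion of the free matroid $\mathcal{M}_0$ on $E_0 = E \setminus \bigcup_i E_i$ is a nice touch that makes precise something the paper leaves implicit, but otherwise the arguments are the same.
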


The result follows from immediately
Proposition~\ref{prop:bond}.
and the following standard lemma which we state without a proof.

\begin{lemma}
Let $\mathcal{M}_1, \ldots, \mathcal{M}_k$ be $k$ matroids, each
matroid $\mathcal{M}_i$ is defined over its own ground set $U_i$.
Then a direct sum
\[
	\mathcal{M} = \set{\bigcup_{i = 1}^k X_i \mid X_i \in \mathcal{M}_i}
\]
is a matroid over $\bigcup_{i = 1}^k U_i$.
\end{lemma}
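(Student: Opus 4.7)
The plan is to verify the three defining axioms of a matroid for $\mathcal{M}$ directly, reducing each of them to the corresponding axiom in the factor matroids $\mathcal{M}_i$. The critical structural fact, used at every step, is that since the ground sets $U_i$ are pairwise disjoint (implicit in the notion of direct sum), any $X \subseteq \bigcup_i U_i$ admits a unique decomposition $X = \bigcup_i X_i$ with $X_i = X \cap U_i$. Thus $X \in \mathcal{M}$ is equivalent to $X \cap U_i \in \mathcal{M}_i$ for every $i$.

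For the empty-set axiom, take $X_i = \emptyset \in \mathcal{M}_i$ for every $i$, which gives $\emptyset \in \mathcal{M}$. For the hereditary axiom, take $X \in \mathcal{M}$ and $Y \subseteq X$. Writing $Y_i = Y \cap U_i$, we have $Y_i \subseteq X \cap U_i \in \mathcal{M}_i$, so the hereditary property of $\mathcal{M}_i$ yields $Y_i \in \mathcal{M}_i$, and hence $Y = \bigcup_i Y_i \in \mathcal{M}$.

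For the exchange axiom, given $X, Y \in \mathcal{M}$ with $\abs{Y} > \abs{X}$, decompose $X = \bigcup_i X_i$ and $Y = \bigcup_i Y_i$ via intersections with the $U_i$. By disjointness, $\abs{X} = \sum_i \abs{X_i}$ and $\abs{Y} = \sum_i \abs{Y_i}$, so a pigeonhole argument produces some index $j$ with $\abs{Y_j} > \abs{X_j}$. Applying the exchange axiom inside $\mathcal{M}_j$ yields an element $e \in Y_j \setminus X_j$ with $X_j \cup \set{e} \in \mathcal{M}_j$. Because $U_j$ is disjoint from the other $U_i$, we have $e \in Y \setminus X$, and $X \cup \set{e}$ decomposes with $j$-th part $X_j \cup \set{e} \in \mathcal{M}_j$ and all remaining parts $X_i \in \mathcal{M}_i$, so $X \cup \set{e} \in \mathcal{M}$. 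The argument is essentially bookkeeping; the only conceptual step is the pigeonhole passage from $\abs{Y} > \abs{X}$ to $\abs{Y_j} > \abs{X_j}$ for some $j$, which is exactly where disjointness of the ground sets is used, and no real obstacle is anticipated.
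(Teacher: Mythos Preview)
Your proof is correct and is the standard textbook verification of the matroid axioms for a direct sum. The paper itself states this lemma without proof, explicitly calling it a ``standard lemma which we state without a proof,'' so there is nothing to compare against; your argument fills in exactly the routine details the authors chose to omit.
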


Thus we can use the result by~\citet{fisher1978analysis} but now we have only one matroid instead
of $k$ matroids. This gives us an approximation guarantee of $1/2$.

\spara{Computational complexity}:
Let us finish with the compu\-ta\-tional-complexity analysis of the greedy algorithm.
Assume that given a graph $G = (V, E)$, 
we have already enumerated all open triangles. 
Let $t$ be the number of such triangles.

During the while-loop of the greedy algorithm, we maintain a priority queue for
$m$ edges, prioritized with the number of violations induced by a single edge.
Whenever, a strong edge is deleted, we visit every open triangle induced by
this edge and reduce the number of violations of the strong sister edge by 1.
Note that we visit every triangle at most twice, so maintaining the queue
requires $\bigO(t + m \log n)$ time, if we use Fibonacci heap. To check the
connectivity, we can use the technique introduced by~\citet{holm2001poly}, allowing us to
do a connectivity check in $\bigO(\log^2n)$ amortized time.  
Thus, in total we need $\bigO(t + k m \log^2 n )$
time, plus the time to build the list of open triangles. 
Building such a list can be done in
$\bigO\pr{\sum_v \deg^2(v)}$ time.

\section{Related work}
\label{sec:related}

The study of interpresonal ties has a long history in social psychology.
Several researchers have investigated the role of different types of social ties 
with respect to structural properties of social networks, 
as well as with respect to information-propagation phenomena. 
For example, in economics, 
\citet{montgomery1992job}
showed that weak ties are positively correlated 
to higher wages and higher aggregate employment rates.
More recent works 
considered how different social ties are formed and how they evolve
in online social networks, such as 
email networks~\cite{kossinets2006empirical} and
mobile-phone networks~\cite{onnela2007structure}.

The strong triadic closure (\stc) property, 
which forms the basis of our inference algorithm, 
was first formulated in the seminal paper of 
\citet{granovetter1973strength}, 
while evidence that this property holds in social networks 
has appeared in earlier works~\cite{davis1970clustering,newcomb2961acquaintance}. 
\citet{memic2009testing} have conducted a more recent study confirming 
that the principle remains valid on more
recently-collected datasets~\cite{memic2009testing}.

In computer science, 
there have been several works that study the problem of 
inferring the strength of social ties in a network.

\citet{kahanda2009using} 
use transactional events,  
such as communication and file transfers, 
to predict link strength, 
by applying techniques from the literature of the 
link-prediction problem~\cite{liben2007link}.
It is shown that the approach can accurately predict strong relationships.
\citet{gilbert2009predicting} propose a predictive model
for inferring tie strength. 
The model uses variables describing the interaction of users in a social-media platform. 
The paper also illustrates how the inferred tie strength can be used to improve 
social-media features, such as, privacy controls, message routing, and friend recommendation.
Likewise, \citet{xiang2010modeling} leverage user-level interaction data. 
They formulate the problem of inferring hidden relationship strengths
using a latent-variable model, which is learned by a coordinate-ascent optimization procedure.
A feature of their setting is that social strengths are modeled as real-valued variables, 
not just binary. 
\citet{jones2013inferring} examined a large set of features for the task of
predicting the strength of social ties
on a Facebook interaction dataset, 
and found that the frequency of online interaction is diagnostic of strong ties,
while
private communications (messages) are not necessarily more informative than 
public communications (comments, wall posts, and other interactions).

\citet{backstrom2014romantic}
consider a particular type of social ties --- romantic relationships ---
and they ask whether this can be accurately recognized. 
They use a large sample of Facebook data to answer the question affirmatively, 
and on the way they develop a new type of tie strength,
{\em the extent to which two people's mutual friends are not themselves well-connected},
which they call ``dispersion.''

In a different direction,
\citet{fang2015uncovering} consider only closed triangles
and ask whether it is possible to find out which edges are formed last, 
i.e., which edges closed an open triad.
The underlying research question is to recover the 
dynamic information in the triadic-closure process. 
They approach this problem using a probabilistic factor-graph model, 
and apply the proposed model on a large collaboration network.

Researchers have also studied the tie-strength inference problem
in the presence of more than one social network.
\citet{gilbert2012predicting} explore how well a tie strength model developed for one social-media 
platform adapts to another, 
while 
\citet{tang2012inferring}
consider a generalization of the problem over multiple heterogeneous networks.
Their work uses a transfer-based factor-graph model, 
and also incorporates features motivated from social-psychology theories, 
such as 
social balance \cite{easley2010networks},
structural holes \cite{burt2009structural}, and
and social status \cite{davis1972structure}.

Most of the above works on tie-strength inference 
utilize pairwise user-level interaction data, 
such as email, private messages, public mentions, frequency of interactions, and so on. 
In many cases such detailed data are not available. 
Our objective is to address the tie-strength inference problem 
using non private data, such as the structure of the social network 
and information about communities and teams that users have participated.

Conceptually and methodologically our paper is related to the 
work of \citet{sintos2014using}, 
who use as available information only the network structure, 
to infer strong and weak ties with the means of the 
strong triadic closure property~\cite{easley2010networks}.
We extend that work by introducing community-level information
and a corresponding connectivity constraint
to account for explaining the observed community structure:  
namely, we require that each community should be connected via strong ties. 
Like the work of \citet{sintos2014using},
we follow a combinatorial approach, 
but the techniques we use are significantly different.

A problem related to the inference of tie strength
is the problem of predicting edge signs in 
social networks~\cite{chiang2014prediction,leskovec2010predicting}. 
The sign of an edge is typically interpreted as `friend' or `foe', 
and thus, existing algorithms utilize theories from social psychology
that are developed for this kind of relationships, 
in particular social balance~\cite{easley2010networks} and 
social status theory~\cite{davis1972structure}.

In our experiments we are comparing our method with the algorithm of 
\citet{angluin13connectivity}, 
which takes as input a set of teams (communities) over a set of entities
and seeks to add a minimal number of edges among the entities
so that all given teams are connected. 
The algorithm is greedy and it is shown to have a $\bigO(\log n)$ 
approximation guarantee.
In our case, in addition to the set of teams
we also have as input an underlying network, 
and edges are selected only if they are network edges. 
Selected edges are considered strong, 
and non-selected edges are considered weak.
Thus, the method of \citet{angluin13connectivity}
is a combinatorial approach, 
which aim to satisfy connectivity among the input communities (like our method), 
but it does not take into account \stc violations.
On the other hand, it aims to minimize the number of strong edges
(while our method is oblivious to this consideration).

\section{Experimental evaluation}\label{sec:exp} %
\begin{table*}[t]
		\caption{Network characteristics. $|V|$: number of vertices; $|\edges|$:
		number of edges in the underlying network; $|V_0|$: number of vertices, which participate in any given set (community); $|\indedges|$: number of edges induced by communities;
		$\ell$: number of sets (communities); 
		$\avg(\alpha_0)$: average density of subgraphs induced by input communities; 
		$s_{\min}$, $s_{\avg}$: minimum and average set size; 
		$t_{\max}$, $t_{\avg}$: maximum and average participation of a vertex to a set.}
		\label{tab:stats}
	\setlength{\tabcolsep}{0pt}
	\centering
	\begin{tabular*}{\textwidth}{@{\extracolsep{\fill}}l l rrrrrrrrr}
		\toprule 
		Dataset & $|V|$ &$|\edges|$&  $|V_0|$ &$|\indedges|$ & $\ell$&$\avg(\alpha_0)$ & $s_{\min}$&$s_{\avg}$& $t_{\max}$&$t_{\avg}$\\		
		\midrule
		{\DBLP}  &  10001& 27687& 10001 & 22264  & 1767 & 0.58 & 6 & 7.46 & 10 & 1.31\\
		{\youtube}  & 10002 & 72215 & 10001 & 15445 & 5323 & 0.69 & 2 & 4.02 & 82 & 2.14\\		
		{\KDD}  & 2891 & 11208 & 1598  &3322 & 5601  &0.96 & 2 & 2.40 & 107 & 8.41\\
		{\ICDM}  & 3140  &10689 & 1720 & 3135  &5937 & 0.96 & 2 & 2.34 & 139 & 8.11\\
		{\FBcirc} & 4039 & 88234 & 2888 & 55896 & 191 & 0.64 & 2 & 23.15 & 44 & 1.53\\
		{\FBfeat} & 4039 &  88234 &  2261  & 20522  & 1239 &  0.93 & 2 & 3.75 & 13 & 2.05\\
		{\lastFMart} & 1892 & 12717 & 1018 & 2323 & 2820 & 0.89 & 2 & 2.91 & 221 &  8.08\\
		{\lastFMtag}  & 1892 & 12717 & 855 & 1800 & 651 & 0.88& 2 & 3.43& 20 & 2.61\\
		{\DBbook}  & 1861& 7664& 932 &1145& 1288 &0.97& 2 & 2.27 & 27 & 3.13\\
		{\DBtag} & 1861&  7664 & 1507 & 2752 & 4167&  0.96 & 2 & 2.26 & 68 & 6.25\\	
		\bottomrule
	\end{tabular*}

\end{table*}

In this section we present our experimental evaluation.
We describe the datasets used in the evaluation, 
we discuss the baselines, 
and then present the evaluation results, 
including quantitative experiments and a case study.

The datasets and the implementation of the methods
used in our experimental evaluation are publicly available.\footnote{https://github.com/polinapolina/connected-strong-triadic-closure}

\spara{Datasets.}
We use $10$ datasets, each dataset
consists of a network and a set of communities. 
We describe these datasets below, 
while their basic characteristics are shown in 
Table~\ref{tab:stats}. To ensure connectivity of each community, we selected only one part of each disconnected community, which induces the largest connected component.

\smallskip
\noindent
$\bullet$ {\KDD} and {\ICDM} are subgraphs of the DBLP co-authorship network, 
restricted to articles published in the respective conferences.
Edges represent co-authorships between authors.
Communities are formed by keywords that appear in paper abstracts.

\smallskip
\noindent
$\bullet$ \FBcirc and \FBfeat are Facebook ego-networks available 
at the SNAP repository~\cite{snapnets}. 
In {\FBcirc} the communities are social-circles of users.
In {\FBfeat} communities are formed by user profile features.

\smallskip
\noindent
$\bullet$ 
{\lastFMart} and {\lastFMtag} are friendship networks of last.fm 
users.\footnote{\url{grouplens.org/datasets/hetrec-2011/}}
A community in {\lastFMart} and {\lastFMtag} is formed by users who listen to the
same artist and genre, respectively.

\smallskip
\noindent
$\bullet$ 
{\DBbook} and {\DBtag} 
are friendship networks of Delicious users.\footnote{\url{www.delicious.com}}
A community in \DBbook and \DBtag is formed by users who use the same 
bookmark and keyword, respectively.

\smallskip
Additionally, 
we use
SNAP datasets~\cite{snapnets} with ground-truth communities. 
To have more focused groups, 
we only keep communities with size less than~10.
To avoid having disjoint communities, 
we start from a small number of seed communities and iteratively add other communities
that intersect at least one of the already selected.
We stop when the number of vertices reaches $10\,000$.
In this way we construct the following datasets:

\smallskip
\noindent
$\bullet$ \DBLP:
This is also a co-authorship network.
Communities are defined by publication venues.

\smallskip
\noindent
$\bullet$ \youtube: 
This is a social network of Youtube users.
Communities consist of user groups created by users.

\spara{Baselines.}
There are no direct baselines to our approach since the problem definition is
novel. Instead we focus on comparing our method with the two techniques that
inspired our approach. The first method is by~\citet{sintos2014using} and it
maximizes the number of strong edges while keeping the number of \stc violations equal to $0$.
The second method is by~\citet{angluin13connectivity} and it minimizes the number
of edges needed to connect the communities.
We refer to these methods as \BLA and \BLS, respectively, and we call our method~\our.

Note that \BLA is oblivious to the \stc property while \BLS does not use any
community information.  As we combine both goals, we expect that \our
results in a compromise of these two baselines.

\begin{table*}[t]
	\caption{Characteristics of edges selected as strong by \our and the two baselines. 
	$b$: number of violated triangles in the solution divided by the number of open triangles (all possible violations); 
	$s$: number of strong edges in the solution divided by the number of all edges; 
	$c$: average number of connected components per community. 
	$\BLAalt$ corresponds to \BLA; $\BLSalt$ corresponds to \BLS.}
	\label{tab:results}
	\centering
	\begin{tabular*}{\textwidth}{@{\extracolsep{\fill}}l r lrr  r rrr r  rrr}
		\toprule 
		&& \multicolumn{3}{c}{\our} && \multicolumn{3}{c}{\BLA}  && \multicolumn{3}{c}{\BLS}\\
		\cmidrule{3-5}
		\cmidrule{7-9}
		\cmidrule{11-13}
		Dataset && $b$& $s$ & $c$&
		&$b_{\BLAalt}/b$& $s_{\BLAalt}/s$& $c_{\BLAalt}$&
		&$b_{\BLSalt}/b$& $s_{\BLSalt}/s$ &$c_{\BLSalt}$\\		
		\midrule
		{\DBLP} && 0.07 & 0.47& 1 &&2.77 & 0.77 & 1& &0.0 & 1.08&3.53\\
		{\youtube}   && 0.01& 0.16 &1& & 1.21&0.98&1&&0.0 &0.49&3.30\\
		{\KDD} && 0.08 &0.35&1& &1.09& 0.63& 1&&0.0&0.81&1.93\\
		{\ICDM}   && 0.07& 0.38 &1 &&1.06& 0.57&1&& 0.0& 0.83&1.84\\
		{\FBcirc} && 0.002& 0.15 & 1 &&61.05& 0.20&1&& 0.0&1.05&8.76\\ 
		{\FBfeat}&& 0.003& 0.12 &1 &&0.36& 0.22&1&& 0.0&1.35&2.41\\ 
		{\lastFMart} && 0.02&0.15 & 1 &&1.11& 0.78 &1&&0.0 &0.67&2.58\\ 
		{\lastFMtag}  &&0.008 & 0.12  &1& & 1.17& 0.68& 1&&0.0& 0.83&2.98\\ 
		{\DBbook}  &&0.01& 0.35 & 1 &&1.01& 0.35&1&&0.0& 1.04&1.61\\ 
		{\DBtag} && 0.10& 0.45&1 &&1.02& 0.66&1& &0.0& 0.80&1.74\\ 
		\bottomrule
	\end{tabular*}
\end{table*}

\spara{Comparison with the baselines.}
We compare the performance of \our with the two baselines.
We run all algorithms on our datasets
and measure the number of edges selected as strong, 
the number of \stc violations,
and the number of connected components created by strong edges for each
of the input communities
(so as to test the fragmentation of the communities).
The results are shown in Table~\ref{tab:results}.
The number of \stc violations and the number of strong edges 
are reported as ratios (see table) for easy comparison.
As expected, \BLA introduces more \stc violations than \our: 
typically between 1\%--21\%. 
Interestingly, \BLA introduces less violations in \lastFMtag and 60
times more violations in \lastFMart.

On the other hand, \BLS results in disconnected communities, 
ranging from $1.74$ to $8.76$ connected components per community, on average.

\begin{table}[t]
	\caption{Precision and recall of \BLA.}
	\label{tab:accuracyBLA}
	\centering
	\begin{tabular}{ l  r r r r }
		\toprule 
		Dataset & $P_W$ &  $R_W$ &  $P_S$ &  $R_S$ \\		
		\midrule
		{\KDD} &0.86&0.92&0.63&0.48\\ 
		{\ICDM} &0.87&0.93&0.66&0.50 \\ 
		{\lastFMart} &0.91&0.95&0.54&0.37\\ 
		{\lastFMtag} &0.92&0.95&0.26&0.16\\ 
		{\DBbook} &0.92&0.94&0.36&0.27\\ 
		{\DBtag} &0.82&0.87&0.50&0.41\\ 
		\bottomrule
	\end{tabular}
\end{table}

\begin{table}[t]
	\caption{Precision and recall of \BLS.}
	\label{tab:accuracyBLS}
	\centering
	\begin{tabular}{ l  r r r r  }
		\toprule 
		Dataset & $P_W$ &  $R_W$ &  $P_S$ &  $R_S$ \\		
		\midrule
		{\KDD} &0.78&0.70&0.19&0.26\\ 
		{\ICDM} & 0.77&0.66&0.18&0.28\\ 
		{\lastFMart} & 0.88&0.90&0.14&0.12\\ 
		{\lastFMtag} &0.91&0.89&0.09&0.11\\ 
		{\DBbook} &0.92&0.64&0.13&0.49\\ 
		{\DBtag} &0.75&0.62&0.22&0.35\\ 
		\bottomrule
	\end{tabular}
\end{table}

\begin{table}[t]
	\caption{Precision and recall of \our.}
	\label{tab:accuracy}
	\centering
	\begin{tabular}{ l  r r r r }
		\toprule 
		Dataset & $P_W$ &  $R_W$ &  $P_S$ &  $R_S$ \\		
		\midrule
		{\KDD} &0.85&0.75&0.36&0.51 \\ 
		{\ICDM} & 0.85&0.71&0.34&0.55 \\ 
		{\lastFMart} & 0.91&0.90&0.36&0.39 \\ 
		{\lastFMtag} &0.92&0.90&0.15&0.20 \\ 
		{\DBbook} &0.93&0.67&0.15&0.57 \\ 
		{\DBtag} &0.81&0.66&0.32&0.55 \\ 
		\bottomrule
	\end{tabular}
\end{table}

\smallskip
In the second experiment we 
test whether strong and weak ties can predict intra- and inter-community edges, respectively.
The rationale of this experiment
is to test the hypothesis that weak ties are bridges between different communities.
With respect to the different methods, 
our objective is to further demonstrate that \our method results 
as a middle ground between \BLA and \BLS.

We randomly select half of the communities as {\em test communities} 
and run \our and \BLA using as input the underlying network and the other half of the communities. 
We also run  \BLA using as input the underlying network; 
recall that this algorithm does not use any community information as input. 
Next, using the test communities we construct a set of intra-community edges
$E_{\mathit{intra}}$, consisting of edges that belong to at least one community, and 
inter-community edges $E_{\mathit{inter}}$, 
consisting of edges that bridge two communities (but do not belong to any single community).

Let us denote all strong edges in
the output of a given method as $S$ and the weak edges as $W$.
We define precision $P_W$ and recall $R_W$ for weak edges as 
\[
	P_W=\frac{|W\cap E_{\mathit{inter}}|}{|W|} \quad\text{and}\quad R_W=\frac{|W\cap E_{\mathit{inter}}|}{|E_{\mathit{inter}}|},
\]
and precision $P_S$ and recall $R_S$ for strong edges as 
\[
	P_S=\frac{|S\cap E_{\mathit{intra}}|}{|S|} \quad\text{and}\quad R_S=\frac{|S\cap E_{\mathit{intra}}|}{|E_{\mathit{intra}}|}. 
\]

\smallskip
\BLA selects greedily edges that connect as many communities as possible.
In other words, it prefers edges that are in many communities in the training set, 
and this acts as a strong signal for an edge being also in a community in the test set. 
The results
shown in Tables~\ref{tab:accuracyBLA}--\ref{tab:accuracy} support this intuition,
showing that \BLA obtains the best results. We also see that \BLS, which does not
use any community information, has the worst results, while our method is able
to improve \BLS by incorporating information from communities. 

\spara{Running time.}
Our implementation was done in Python and the bottleneck of the algorithm
is constructing the list of wedges. The running times vary greatly from dataset
to dataset. At fastest we needed 52 seconds while the at slowest we needed almost 5 hours.
We should point out that a more efficient implementation as well using parallelization
with constructing the wedges should lead to significant reduction in computational time.

\spara{Case study.}
To demonstrate a simple use case, we use a snippet of \KDD dataset: 
We picked five recent winners of SIGKDD innovation award: Philip S. Yu,
Hans-Peter Kriegel, Pedro Domingos, Jon M. Kleinberg and Vipin Kumar and
constructed an underlying network  as a union of their ego-nets.
We then used 5 common topics, \emph{cluster}, \emph{classif}, \emph{pattern},
\emph{network}, and \emph{distribut} as communities.
Figure~\ref{fig:dblp} depicts the discovered edges of \our. From the figure we see that
showing only strong edges significantly simplifies the graph. The selected strong edges
are reasonable: for example a path from Hans-Peter Kriegel to Pedro Domingo was
Arthur Zimek, Karsten Borgwardt, and Luc De Raedt, while a path from Pedro Domingo to Jon Kleinberg
was Luc De Raedt, Xifeng Yan, Zhen Wen, Ching-Yung Lin, Hang-hang Tong, Spiros Papadimitriou
Christos Faloutsos, and Jure Leskovec.

A zoom-in version of the graph of Figure~\ref{fig:dblp},
showing the names of all authors,
is omitted due to space constraints
but can be found in the public code and dataset 
repository.\footnote{https://github.com/polinapolina/connected-strong-triadic-closure}

\begin{figure*}
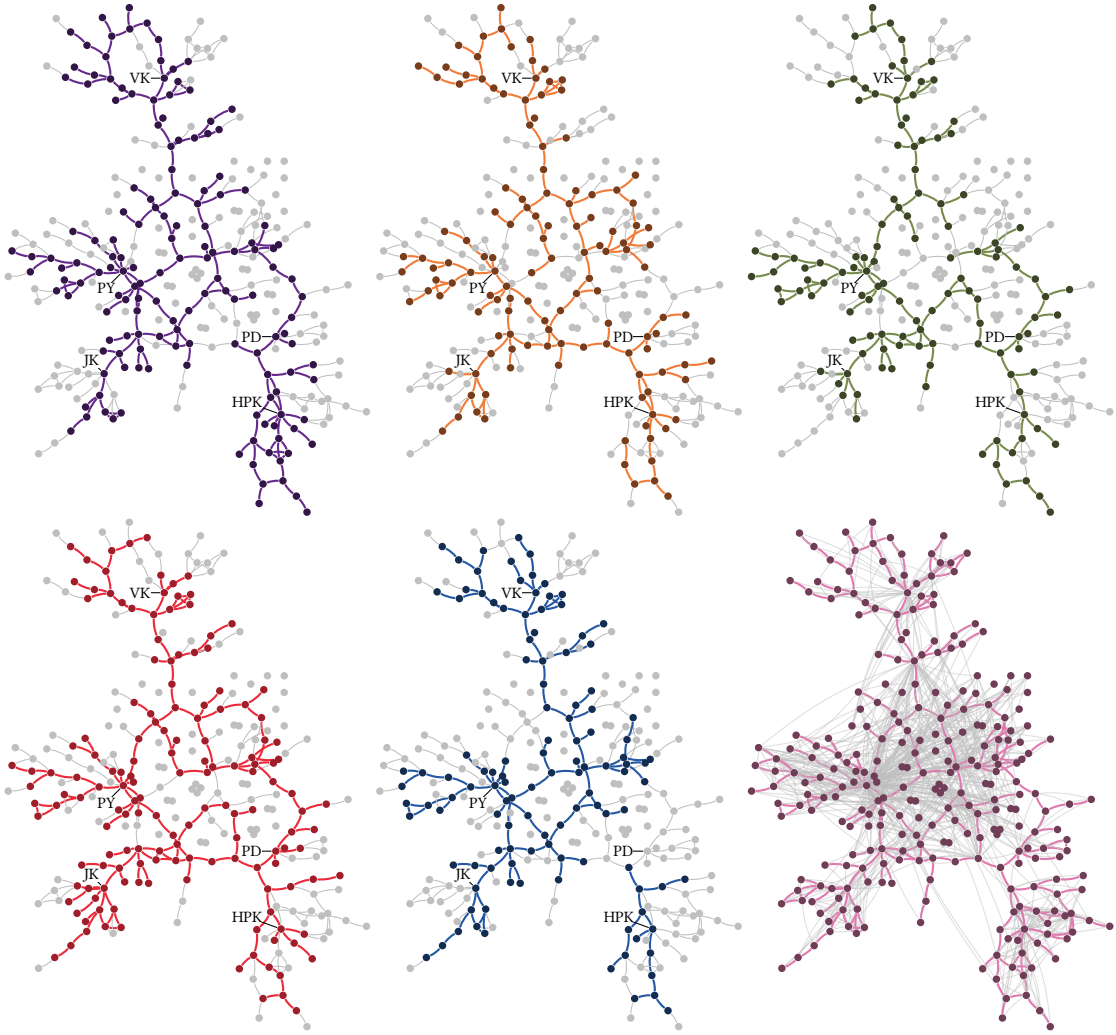

\begin{tikzpicture}[scale = 0.4]
\tikzstyle{lm1} = [circle, fill = yafcolor1!50!black, inner sep = 1pt]
\tikzstyle{lm2} = [circle, fill = gray!50, inner sep = 1pt]
\tikzstyle{ll} = []
\tikzstyle{le1} = [yafcolor1, thick, bend left = 10]
\tikzstyle{le2} = [gray!50, bend left = 10]
\input{dblp_graph/dblp_team1}
\draw (n140) -- ++(-135:0.8cm) node[fill=white, inner sep = 0pt] {\scriptsize PY};
\draw (n47) -- ++(135:0.6cm) node[fill=white, inner sep = 0pt] {\scriptsize JK};
\draw (n128) -- ++(180:0.8cm) node[fill=white, inner sep = 0pt] {\scriptsize PD};
\draw (n165) -- ++(180:0.8cm) node[fill=white, inner sep = 0pt] {\scriptsize VK};
\draw (n237) -- ++(160:1.2cm) node[fill=white, inner sep = 0pt] {\scriptsize HPK};
\end{tikzpicture}
\begin{tikzpicture}[scale = 0.4]
\tikzstyle{lm1} = [circle, fill = yafcolor2!50!black, inner sep = 1pt]
\tikzstyle{lm2} = [circle, fill = gray!50, inner sep = 1pt]
\tikzstyle{ll} = []
\tikzstyle{le1} = [yafcolor2, thick, bend left = 10]
\tikzstyle{le2} = [gray!50, bend left = 10]
\input{dblp_graph/dblp_team2}
\draw (n140) -- ++(-135:0.8cm) node[fill=white, inner sep = 0pt] {\scriptsize PY};
\draw (n47) -- ++(135:0.6cm) node[fill=white, inner sep = 0pt] {\scriptsize JK};
\draw (n128) -- ++(180:0.8cm) node[fill=white, inner sep = 0pt] {\scriptsize PD};
\draw (n165) -- ++(180:0.8cm) node[fill=white, inner sep = 0pt] {\scriptsize VK};
\draw (n237) -- ++(160:1.2cm) node[fill=white, inner sep = 0pt] {\scriptsize HPK};
\end{tikzpicture}
\begin{tikzpicture}[scale = 0.4]
\tikzstyle{lm1} = [circle, fill = yafcolor3!50!black, inner sep = 1pt]
\tikzstyle{lm2} = [circle, fill = gray!50, inner sep = 1pt]
\tikzstyle{ll} = []
\tikzstyle{le1} = [yafcolor3, thick, bend left = 10]
\tikzstyle{le2} = [gray!50, bend left = 10]
\input{dblp_graph/dblp_team3}
\draw (n140) -- ++(-135:0.8cm) node[fill=white, inner sep = 0pt] {\scriptsize PY};
\draw (n47) -- ++(135:0.6cm) node[fill=white, inner sep = 0pt] {\scriptsize JK};
\draw (n128) -- ++(180:0.8cm) node[fill=white, inner sep = 0pt] {\scriptsize PD};
\draw (n165) -- ++(180:0.8cm) node[fill=white, inner sep = 0pt] {\scriptsize VK};
\draw (n237) -- ++(160:1.2cm) node[fill=white, inner sep = 0pt] {\scriptsize HPK};
\end{tikzpicture}

\begin{tikzpicture}[scale = 0.4]
\tikzstyle{lm1} = [circle, fill = yafcolor7, inner sep = 1pt]
\tikzstyle{lm2} = [circle, fill = gray!50, inner sep = 1pt]
\tikzstyle{ll} = []
\tikzstyle{le1} = [yafcolor4, thick, bend left = 10]
\tikzstyle{le2} = [gray!50, bend left = 10]
\input{dblp_graph/dblp_team4}
\draw (n140) -- ++(-135:0.8cm) node[fill=white, inner sep = 0pt] {\scriptsize PY};
\draw (n47) -- ++(135:0.6cm) node[fill=white, inner sep = 0pt] {\scriptsize JK};
\draw (n128) -- ++(180:0.8cm) node[fill=white, inner sep = 0pt] {\scriptsize PD};
\draw (n165) -- ++(180:0.8cm) node[fill=white, inner sep = 0pt] {\scriptsize VK};
\draw (n237) -- ++(160:1.2cm) node[fill=white, inner sep = 0pt] {\scriptsize HPK};
\end{tikzpicture}
\begin{tikzpicture}[scale = 0.4]
\tikzstyle{lm1} = [circle, fill = yafcolor5!50!black, inner sep = 1pt]
\tikzstyle{lm2} = [circle, fill = gray!50, inner sep = 1pt]
\tikzstyle{ll} = []
\tikzstyle{le1} = [yafcolor5, thick, bend left = 10]
\tikzstyle{le2} = [gray!50, bend left = 10]
\input{dblp_graph/dblp_team5}

\draw (n140) -- ++(-135:0.8cm) node[fill=white, inner sep = 0pt] {\scriptsize PY};
\draw (n47) -- ++(135:0.6cm) node[fill=white, inner sep = 0pt] {\scriptsize JK};
\draw (n128) -- ++(180:0.8cm) node[fill=white, inner sep = 0pt] {\scriptsize PD};
\draw (n165) -- ++(180:0.8cm) node[fill=white, inner sep = 0pt] {\scriptsize VK};
\draw (n237) -- ++(160:1.2cm) node[fill=white, inner sep = 0pt] {\scriptsize HPK};
\end{tikzpicture}
\begin{tikzpicture}[scale = 0.4]
\tikzstyle{lm1} = [circle, fill = yafcolor8!50!black, inner sep = 1pt]
\tikzstyle{lm2} = [circle, fill = gray!50, inner sep = 1pt]
\tikzstyle{ll} = []
\tikzstyle{le1} = [yafcolor8, thick, bend left = 10]
\tikzstyle{le2} = [gray!50, bend left = 10, opacity=0.5]
\input{dblp_graph/dblp_whole}
\end{tikzpicture}

\caption{Discovered strong edges of 5 ego-networks of KDD innovation award winners.
The first 5 figures contain only strong edges: the colored edges and vertices show 5 topics
that were used as input: cluster, classif, pattern, network, distribut.
The last topic consisted of 2 connected components which we used as two separated communities.
The last figure shows strong \emph{and} weak edges.
Some of the vertices do no belong to any of the communities. Some edges are strong despite
not belonging to any of the communities because we keep edges that do not induce violations.
}
\label{fig:dblp}

\end{figure*}

\section{Concluding remarks}
\label{sec:conclusions}

We presented a novel approach for the problem
of inferring the strength of social ties. 
We assume that social ties can be one of two types, 
strong or weak, 
and as a guiding principle for the inference process 
we use the strong triadic closure property. 
In contrast to most works that use interaction data between users, 
which are private and thus, typically not available, 
we also consider as input a collection of {\em tight} communities.
Our assumption is that such tight communities 
are connected via strong ties. 
This assumption is valid in cases when 
being part of a community implies a strong connection to one of the existing members.
For instance, in a scientific collaboration network, 
a student is introduced on a research topic
by his/her supervisor who is already working in that topic.

Based on the \stc principle
and our assumption about com\-munity-level connectivity, 
we formulate two variants of the tie-strength inference problem:
\prbminstc, where we ask to minimize the number of \stc violations, 
and \prbmaxtri, where we the goal is to maximize the number of non-violated open triangles.
We show that both problems are \NP-hard.
Furthermore, we show that the \prbminstc problem  is hard to approximate, 
while for \prbmaxtri we develop an algorithm with approximation guarantee.
For the approximation algorithm we use a greedy algorithm
for maximizing a submodular function on intersection of matroids. 

There are many interesting directions to explore in the future. 
An interesting question is to consider alternative problem formulations that 
combine the strong triadic closure property with other community-level constraints, 
such as density and small diameter. 
We would also like to consider formulations that incorporate user features.
A different direction is to consider an interactive version of the problem, 
where the goal is select a small number of edges to query, 
so that the correct labeling on those edges can be used to 
maximize the accuracy of inferring the strength of the remaining edges.

\spara{Acknowledgements.}
This work was supported by 
the Tekes project ``Re:Know,'' 
the Academy of Finland project ``Nestor'' (286211),
and the EC H2020 RIA project ``SoBigData'' (654024).
\balance

\bibliographystyle{ACM-Reference-Format}
\bibliography{bibliography} 

\end{document}